\newcommand{\prev}{\mathit{prev}}
\newcommand{\forwd}{\mathit{fwd}}
\newcommand{\PSA}{\mathit{PSA}}
\newcommand{\PLCP}{\mathit{pLCP}}
\newtheorem{observation}{Observation}
\begin{document}
\title{Direct Linear Time Construction of Parameterized Suffix and LCP Arrays for Constant Alphabets}
\titlerunning{Linear Time Parameterized Suffix and LCP Arrays for Constant Alphabets}
\author{Noriki Fujisato \and
Yuto Nakashima \and
Shunsuke~Inenaga \and
Hideo~Bannai\orcidID{0000-0002-6856-5185} \and
Masayuki~Takeda
}
\authorrunning{Fujisato et al.}
\institute{Kyushu University, 744 Motooka, Nishi-ku, Fukuoka 819-0395, Japan\\
\email{\{noriki.fujisato,yuto.nakashima,inenaga,bannai,takeda\}@inf.kyushu-u.ac.jp}\\
}
\maketitle              \begin{abstract}
  We present the first worst-case linear time algorithm that directly computes the parameterized suffix and LCP arrays for constant sized alphabets. 
  Previous algorithms either required quadratic time or the parameterized suffix tree to be built first.
  More formally, 
  for a string over static alphabet $\Sigma$ and parameterized alphabet $\Pi$,
  our algorithm runs in $O(n\pi)$ time and $O(n)$ words of space,
  where $\pi$ is the number of distinct symbols of $\Pi$ in the string.
\keywords{parameterized pattern matching, paramterized suffix array paramterized LCP array}
\end{abstract}
\section{Introduction}
Parameterized pattern matching is one of the well studied
``non-standard'' pattern matching problems
which was initiated by Baker~\cite{baker92-dup},
in an application to find duplicated code where variable names may be renamed.
In the parameterized matching problem, we consider strings over
an alphabet partitioned into two sets:
the parameterized alphabet $\Pi$ and the static alphabet $\Sigma$.
Two strings $x,y\in (\Pi\cup\Sigma)^*$ of length $n$ are said to
parameterized match (p-match), if one can be obtained
from the other with a bijective mapping over symbols of $\Pi$,
i.e.,
there exists a bijection $\phi: \Pi \rightarrow \Pi$
such that for all $1 \leq i \leq n$,
$x[i] = y[i]$ if $x[i] \in \Sigma$,
and $\phi(x[i]) = y[i]$ if $x[i] \in\Pi$.
For example,
if $\Pi = \{\mathtt{x},\mathtt{y},\mathtt{z}\}$ and $\Sigma = \{\mathtt{A},\mathtt{B},\mathtt{C}\}$,
strings $\mathtt{xxAzxByzBCzy}$ and $\mathtt{yyAxyBzxBCxz}$ p-match, since we can choose
$\phi(\mathtt{x}) = \mathtt{y}, \phi(\mathtt{y})= \mathtt{z}, $ and $\phi(\mathtt{z}) = \mathtt{x}$,
while strings $\mathtt{xyAzzByxBCz}$ and $\mathtt{yyAzxByxBCy}$
do not p-match, since
there is no such bijection on $\Pi$.
As parameterized matching captures the ``structure'' of the string,
it has also been extended to RNA structural matching~\cite{Shibuya2004}.

Baker introduced the so-called {\em prev encoding} of a p-string
which maps each symbol of the p-string that is in $\Pi$
to the distance to
its previous occurrence (or $0$ if it is the first occurrence), and
showed that two p-strings p-match if and only if their prev encodings are equivalent.
For example, the prev encodings for p-strings $\mathtt{xxAzxByzBCzy}$ and $\mathtt{yyAxyBzxBCxz}$ are both $(0,1,\mathtt{A},0,3,\mathtt{B},0,4,\mathtt{B},\mathtt{C},3,5)$.
Thus, the parameterized matching problem amounts to efficiently comparing the
prev encodings of the p-strings.

Using the prev encoding allows for the development of data structures
that mimic those of standard strings.
The central difficulty, in contrast with standard strings, is in coping with the following
property of prev encodings; a substring of a prev encoding is not necessarily
equivalent to the prev encoding of the corresponding substring.

Nevertheless, several data structures and algorithms have so far successfully been developed.
Baker proposed the parameterized suffix tree (PST), an analogue of the 
suffix tree for standard strings~\cite{DBLP:conf/focs/Weiner73},
and showed that for a string of length $n$,
it could be built in $O(n|\Pi|)$ time and  $O(n)$ words of space~\cite{DBLP:journals/jcss/Baker96}.
Using the PST for $T$, all occurrences of a substring in $T$ which
parameterized match a given pattern $P$ can be computed in $O(|P|(\log (|\Pi|+|\Sigma|)) + occ)$ time,
where $occ$ is the number of occurrences of the pattern in the text.
Kosaruju~\cite{DBLP:conf/focs/Kosaraju95} further improved the
running time of construction to $O(n\log(|\Pi|+|\Sigma|))$.
Furthermore, Shibuya~\cite{Shibuya2004} proposed an on-line algorithm for constructing the
PST that runs in the same time bounds.

Deguchi et al.~\cite{DBLP:conf/stringology/DeguchiHBIT08} proposed the parameterized suffix array (PSA).
Given the PST of a string, the PSA can be constructed in linear time, but as in the case for
standard strings, the direct construction of PSAs has been a topic of interest.j
Deguchi et al.~\cite{DBLP:conf/stringology/DeguchiHBIT08} showed a linear time algorithm 
for the special case of $|\Pi|=2$ and $\Sigma =\emptyset$.
I et al.~\cite{DBLP:conf/iwoca/TomohiroDBIT09} proposed a lightweight and practically efficient
algorithm for larger $\Pi$, but the worst-case time was still quadratic in $n$.
Beal and Adjeroh~\cite{DBLP:journals/jda/BealA12a} proposed an algorithm
based on arithmetic coding that runs in $O(n)$ time on average.
Furthermore, they claimed a worst-case running time of $o(n^2)$. However, the proved upperbound is
$O(n^2(\frac{\log(n-\log^{1+\varepsilon}n)}{\log^{1+\varepsilon}n}))$ for a very small $\varepsilon > 0$
(Corollary 27 of~\cite{DBLP:journals/jda/BealA12a}), 
so it is only slightly better than quadratic.

In this paper, we break the worst-case quadratic time barrier considerably, and
present the first worst-case linear time algorithm for constructing the parameterized
suffix and LCP arrays of a given p-string, when the number of distinct 
parameterized symbols in the string is constant.
Namely, our algorithm runs in $O(n\pi)$ time and $O(n)$ words of space,
where $\pi$ is the number of distinct symbols of $\Pi$ in the string.

Several other indices for parameterized pattern matching have been proposed.
Diptarama et al.~\cite{DBLP:conf/cpm/DiptaramaKONS17} and Fujisato et al.~\cite{DBLP:journals/corr/abs-1808-01071} proposed the parameterized position heaps (PPH), 
an analogue of the position heap for standard strings~\cite{EHRENFEUCHT2011100},
and showed that it could  be built in $O(n\log(|\Sigma|+|\Pi|))$ time and  $O(n)$ words of space.
Using the PPH for $T$, all occurrences of a substring in $T$ which
parameterized match a given pattern $P$ can be computed in $O(|P|(|\Pi|+\log (|\Pi|+|\Sigma|)) + occ)$ time,
where $occ$ is the number of occurrences of the pattern in the text.
Parameterized BWT's have been proposed in \cite{DBLP:conf/soda/0002ST17}.
Also, paramterized text index with one wildcard was proposed in~\cite{8712707}.

\section{Preliminaries}
For any set $A$ of symbols, $A^*$ denotes the set of
strings over the alphabet $A$.
Let $|x|$ denote the length of a string $x$. The empty string is denoted by $\varepsilon$.
For any string $w\in A^*$, if
$w = xyz$ for some (possibly empty) $x,y,z\in A^*$, $x,y,z$ are respectively called
a {\em prefix}, {\em substring}, {\em suffix} of $w$.
When $x,y,z \neq w$, they are respectively called a
{\em proper} prefix, substring, and suffix of $w$.
For any integer $1 \leq i \leq |x|$, $x[i]$ denotes the $i$th symbol in $x$,
and for any $1\leq i \leq j \leq |x|$, $x[i..j]= x[i]\cdots x[j]$.
Let $\prec$ denote a total order on $A$, as well as
the lexicographic order it induces.
For two strings $x,y\in A^*$,
$x\prec y$
if and only if $x$ is a proper prefix of $y$, or there is some
position $1 \leq k \leq \min \{ |x|,|y| \}$ such that
$x[1..k-1] = y[1..k-1]$ and $x[k] \prec y[k]$.

Let $\Pi$ and $\Sigma$ denote disjoint sets of symbols.
$\Pi$ is called the parameterized alphabet, and $\Sigma$ is called the static alphabet.
A string in $(\Pi\cup\Sigma)^*$ is sometimes called a p-string.
Two p-strings $x,y \in (\Pi\cup\Sigma)^*$ of equal length are said to
{\em parameterized match},
denoted $x\approx y$,
if
there exists a bijection $\phi:\Pi\rightarrow\Pi$, such that for all
$1\leq i \leq |x|$,
$x[i] = y[i]$ if $x[i]\in\Sigma$, and $\phi(x[i]) = y[i]$ if $x[i]\in\Pi$.

The {\em prev encoding} of a p-string $x$ of length $n$ is the string $\prev(x)$ over
the alphabet $\Sigma\cup\{ 0, \ldots, n-1\}$ defined as follows:
\[
  \prev(x)[i] =
  \begin{cases}
    x[i]  & \text{if $x[i] \in \Sigma$,}\\
    0     & \text{if $x[i] \in \Pi$ and $x[i]\neq x[j]$ for any $1 \leq j < i$,}\\
    i-j   & \text{if $x[i] \in \Pi$, $x[i]=x[j]$ and $x[i]\neq x[k]$ for any $j < k < i$.}
  \end{cases}
 \]
For example, if $\Pi=\{\mathtt{s},\mathtt{t},\mathtt{u}\}$,
$\Sigma=\{\mathtt{A}\}$ and p-string
$x=\mathtt{ssuAAstuAst}$, then
$\prev(x)=(0,1,0,\mathtt{A},\mathtt{A},4,0,5,\mathtt{A},4,4)$.
Baker showed that $x \approx y$ if and only if $\prev(x) = \prev(y)$~\cite{DBLP:journals/siamcomp/Baker97}.
We assume that $\Pi$ and $\Sigma$ are disjoint integer alphabets, where
$\Pi = \{ 0, \ldots, n^{c_1}\}$ for some constant $c_1\geq 1$
and $\Sigma = \{ n^{c_1} + 1, \ldots, n^{c_2}\}$ for some constant $c_2\geq 1$.
This way, we can distinguish whether a symbol of a given prev encoding belongs to
$\Sigma$ or not.
Also, given p-string $x$ of length $n$, we can compute $prev(x)$ in $O(n)$ time and space,
by sorting the pairs $(x[i],i)$ using radix sort, followed by a simple scan of the result.

The following are the data structures that we consider in this paper.
\begin{definition}[Parameterized Suffix Array~\cite{DBLP:conf/stringology/DeguchiHBIT08}]
  The parameterized suffix array of a p-string $x$ of length $n$,
  is an array $\PSA[1..n]$ of integers such that
  $\PSA[i] = j$ if and only if
  $\prev(x[j..n])$ is the $i$th lexicographically smallest string in
  $\{ \prev(x[i..n])\mid i = 1,\ldots, n \}$.
\end{definition}

\begin{definition}[Parameterized LCP Array~\cite{DBLP:conf/stringology/DeguchiHBIT08}]
  The parameterized LCP array of a p-string $x$ of length $n$,
  is an array $\PLCP[1..n]$ of integers such that
  $\PLCP[1] = 0$, and
  $\PLCP[i]$, for any $i \in \{ 2, \ldots, n\}$, is the longest common prefix between
  $\prev(x[\PSA[i-1]..n])$ and $\prev(x[\PSA[i]..n])$.
\end{definition}

The difficulty when dealing with the prev encoding of suffixes of a string,
is that they are not necessarily the suffixes of the prev encoding of the string.
It is important to notice however, that, given the prev encoding $\prev(x)$ 
of the whole string $x$, any value specific of the prev encoding of an arbitrary suffix of $x$
can be retrieved in constant time, i.e., for any $1 \leq i \leq n$ and $1 \leq k \leq n-i+1$,
\[
  \prev(x[i..n])[k] = \begin{cases}
    0 & \text{if $x[k']\in\Pi$ and $\prev(x)[k'] > k$.}\\
    \prev(x)[k'] & \text{otherwise}
  \end{cases}
  \]
  where $k' = i+k-1$.
The critical problem for suffix sorting is that
even if two prev encodings $\prev(x[i..n])$ and $\prev(x[j..n])$
share a common prefix and satisfies $\prev(x[i..n]) \prec \prev(x[j..n])$,
it may still be that $\prev(x[j+1..n]) \prec \prev(x[i+1..n])$.

Fig.~\ref{figure:fig1} shows an example of $\PSA$ and $\PLCP$ for the string $\mathtt{stssAtssAs}$.
For example,
we have that $\prev(x[6..10]) \prec \prev(x[1..10])$,
which share a common prefix of length $2$,
yet $\prev(x[2..10]) \prec \prev(x[7..10])$.

 \begin{figure}[htbp]
\begin{center}
\begin{tabular}{|c|c|l|c|}\hline
  $i$ & $\PSA[i]$ & $\prev(x[\PSA[i]..|r|])$ & $\PLCP[i]$\\\hline
  1 & 10 & 0  & 0\\\hline
  2 & 6 & 0 0 1 $\mathtt{A}$ 2 & 1\\\hline
  3 & 2 & 0 0 1 $\mathtt{A}$ 4 3 1 $\mathtt{A}$ 2 & 4\\\hline
  4 & 1 & 0 0 2 1 $\mathtt{A}$ 4 3 1 $\mathtt{A}$ 2 & 2\\\hline
  5 & 3 & 0 1 $\mathtt{A}$ 0 3 1 $\mathtt{A}$ 2 & 1\\\hline
  6 & 7 & 0 1 $\mathtt{A}$ 2 & 3\\\hline
  7 & 4 & 0 $\mathtt{A}$ 0 3 1 $\mathtt{A}$ 2 & 1\\\hline
  8 & 8 & 0 $\mathtt{A}$ 2 & 2\\\hline
  9 & 9 & $\mathtt{A}$ 0 & 0\\\hline
  10 & 5 & $\mathtt{A}$ 0 0 1 $\mathtt{A}$ 2 & 2\\\hline
\end{tabular}
\end{center}
  \caption{An example of the parameterized suffix and LCP arrays for a p-string $x = \mathtt{stssAtssAs}$, where $\Sigma=\{\mathtt{A}\},\Pi=\{\mathtt{s},\mathtt{t}\}$.}\label{figure:fig1}
  \end{figure}
 \section{Algorithms}
In this section we describe our algorithms for constructing the parameterized suffix and LCP arrays.
First, we mention a simple observation below.

From the definition of $\prev(x)$, we have that $\prev(x)[i] = 0$ for some position $i$
if and only if $i$ is the first occurrence of symbol $x[i]\in \Pi$.
Therefore, the following observation can be made.
\begin{observation} For any p-string $x$, the prev encoding $\prev(x')$ of any substring $x'$ of $x$ contains at most $\pi $ positions that are $0$'s, where $\pi$ is the number of distinct symbols of $\Pi$ in $x$.
\end{observation}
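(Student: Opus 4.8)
The statement to prove is the Observation, which claims that for any p-string $x$, the prev encoding $\prev(x')$ of any substring $x'$ of $x$ contains at most $\pi$ positions that are $0$'s, where $\pi$ is the number of distinct symbols of $\Pi$ in $x$.

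The plan is to argue directly from the definition of the prev encoding. First I would recall that, by the definition of $\prev$, a position $i$ in $\prev(x')$ equals $0$ exactly when $x'[i] \in \Pi$ and $x'[i]$ does not occur at any earlier position $j < i$ of $x'$; in other words, position $i$ is the \emph{first} (leftmost) occurrence in $x'$ of the parameterized symbol $x'[i]$. The $\Sigma$ positions are never $0$ by the first case of the definition, and non-first occurrences of $\Pi$-symbols receive the positive value $i-j$ by the third case, so these contribute no zeros either.

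Next I would set up the counting argument: let $Z = \{\, i : \prev(x')[i] = 0 \,\}$ be the set of zero positions. Define the map $i \mapsto x'[i]$ from $Z$ into $\Pi$. This map is well defined since every $i \in Z$ has $x'[i] \in \Pi$. The key step is to show this map is injective: if $i, i' \in Z$ with $i < i'$ and $x'[i] = x'[i']$, then $x'[i']$ occurs earlier in $x'$ (at position $i$), contradicting the fact that $i'$ being a zero position means $x'[i']$ has no earlier occurrence in $x'$. Hence distinct zero positions carry distinct $\Pi$-symbols, so $|Z| \le |\{\, x'[i] : i \in Z\,\}| \le \pi'$, where $\pi'$ is the number of distinct $\Pi$-symbols occurring in $x'$. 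Since $x'$ is a substring of $x$, every $\Pi$-symbol in $x'$ is a $\Pi$-symbol in $x$, so $\pi' \le \pi$, giving $|Z| \le \pi$ as required.

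I do not expect a genuine obstacle here; the statement is essentially immediate once one unwinds the definition. The only point requiring a little care is making explicit that the ``first occurrence'' referred to in the definition of $\prev$ is relative to the substring $x'$ itself (not to the ambient string $x$), which is exactly why the map into $\Pi$ is injective and why the bound is in terms of the distinct symbols rather than the number of $\Pi$-positions. This subtlety connects to the broader theme stressed in the preliminaries, namely that $\prev$ of a substring differs from the corresponding substring of $\prev(x)$, so the relevant notion of ``previous occurrence'' must be taken within $x'$.
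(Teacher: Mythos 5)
Your proof is correct and follows the same idea the paper uses: it notes (just before the Observation) that a position in a prev encoding is $0$ exactly when it is the first occurrence of a $\Pi$-symbol, so the zeros inject into the distinct $\Pi$-symbols of the substring, which number at most $\pi$. Your write-up simply makes this injection explicit; there is no substantive difference in approach.
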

\subsection{$\PSA$ Construction}
Based on this observation, we can see that
the prev encoding of each suffix $x[i..n]$ can be
partitioned into $z_i+1 \leq \pi +1$ {\em blocks},
where $z_i$ is the number of $0$'s in $\prev(x[i..n])$,
and the $j$th block is the substring of $\prev(x[i..n])$ that ends at the
$j$th $0$ in $\prev(x[i..n])$ for $j = 1,\ldots,z_i$,
and the (possibly empty) remaining suffix for $j = z_i+1$.
For technical reasons, we will append $0$ to the last block as well.
That is, we can write
\begin{equation}\label{equation:blocks}
  \prev(x[i..n])0 = B_{i,1}\cdots B_{i,z_i+1}
\end{equation}
where, $B_{i,j}$ denotes the $j$th block of $\prev(x[i..n])$.
Furthermore, for each $j$, let $B_j$ denote the set of all $j$th blocks for all $i = 1,\ldots, n$,
and let $C_{i,j}$ denote the lexicographic rank of $B_{i,j}$ in $B_j$.
Finally, let $C_i$ denote the
string over the alphabet $\{ 1,\ldots, n\}$ obtained
by renaming each block $B_{i,j}$ of the string $\prev(x[i..n])0$ with its lexicographic rank $C_{i,j}$.
More formally,
\begin{eqnarray*}
  B_j &=& \{ B_{i,j} \mid i = 1,\ldots, n\}\\
  C_{i,j} &=& |\{ B_{i',j}\mid B_{i',j}\prec B_{i,j}\}|+1\\ C_i &=& C_{i,1}\cdots C_{i,z_{i}+1}.
\end{eqnarray*}

\begin{lemma}\label{lemma:reduction}
  For any $1 \leq i_1, i_2 \leq n$,
 \[
   \prev(x[i_1..n])\prec\prev(x[i_2..n]) \iff C_{i_1} \prec C_{i_2}
 \]
\end{lemma}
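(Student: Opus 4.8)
The plan is to reduce the lexicographic comparison of the two prev‑encoded suffixes to a symbol‑by‑symbol comparison of the renamed strings $C_{i_1}$ and $C_{i_2}$, exploiting the very rigid shape that the blocks have. The case $i_1 = i_2$ is trivial, since both sides of the claimed equivalence are then false; so assume $i_1 \neq i_2$, in which case the two suffixes have different lengths and hence $\prev(x[i_1..n]) \neq \prev(x[i_2..n])$.

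First I would dispose of the appended $0$. Since $0$ is the smallest symbol of the alphabet $\Sigma \cup \{0,\dots,n-1\}$, for any two \emph{distinct} strings $u,v$ over this alphabet one has $u \prec v \iff u0 \prec v0$: a first mismatch at a position $\le \min(|u|,|v|)$ is unaffected by the appending, and if one of $u,v$ is a proper prefix of the other then, after appending $0$, the shorter string is either still a proper prefix or now mismatches at position $\min(|u|,|v|)+1$, where it carries the minimal symbol $0$ — in both cases the order is preserved. Hence it suffices to prove $\prev(x[i_1..n])0 \prec \prev(x[i_2..n])0 \iff C_{i_1} \prec C_{i_2}$, and by (\ref{equation:blocks}) both sides now refer to genuine block decompositions.

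The heart of the argument — and the step I expect to be the main obstacle — is the rigid shape of the blocks. By construction every block is a (possibly empty) $0$‑free string followed by exactly one $0$: for $j \le z_i$ because the $j$th block runs from just after the $(j-1)$th $0$ up to and including the $j$th $0$, and for $j = z_i+1$ because the tail after the last $0$ is $0$‑free and we appended a $0$. From this I get the crucial property that \emph{within a fixed column $B_j$ no block is a proper prefix of another}: if $w_1 0$ were a proper prefix of $w_2 0$ with $w_1,w_2$ both $0$‑free, then $|w_1| < |w_2|$ and position $|w_1|+1 \le |w_2|$ of $w_2$ would have to be $0$, a contradiction. Consequently the rank map $B_{i,j}\mapsto C_{i,j}$ is order‑faithful on each column: $B_{i_1,j} = B_{i_2,j}$ implies $C_{i_1,j} = C_{i_2,j}$, while $B_{i_1,j}\prec B_{i_2,j}$ with $B_{i_1,j}\neq B_{i_2,j}$ implies $C_{i_1,j} < C_{i_2,j}$ (the distinct $j$th blocks strictly below $B_{i_1,j}$ form a proper subset of those strictly below $B_{i_2,j}$); moreover two distinct blocks of the same column differ at a position that lies within the length of both. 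For an arbitrary block decomposition this order‑faithfulness is false — it holds here only because the trailing $0$'s act as sentinels making distinct blocks of a column prefix‑incomparable, so a comparison decided at block $t+1$ can never be overturned by later blocks.

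Finally I would run a block‑aligned comparison. Let $t$ be the largest index with $B_{i_1,j}=B_{i_2,j}$ for all $j\le t$, and put $m=\min(z_{i_1}+1,z_{i_2}+1)$. If $t=m$, then the shorter of the two encodings, say the one for $i_1$, equals $B_{i_1,1}\cdots B_{i_1,m}$, which — since the encodings are distinct, forcing $z_{i_1}+1<z_{i_2}+1$ — is a proper prefix of the other, and simultaneously $C_{i_1}=C_{i_1,1}\cdots C_{i_1,m}$ is a proper prefix of $C_{i_2}$, so both orderings agree. If $t<m$, then $B_{i_1,t+1}$ and $B_{i_2,t+1}$ both exist and differ, hence mismatch at some position $k$ within the length of both; since the first $t$ blocks are identical strings of equal total length $L$, this becomes a mismatch of $\prev(x[i_1..n])0$ and $\prev(x[i_2..n])0$ at offset $L+k$, while by order‑faithfulness $C_{i_1}$ and $C_{i_2}$ agree on their first $t$ symbols and differ at position $t+1$ in the same direction. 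In every case $\prev(x[i_1..n])0\prec\prev(x[i_2..n])0 \iff C_{i_1}\prec C_{i_2}$, which combined with the second paragraph yields the lemma.
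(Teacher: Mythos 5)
Your proof is correct and follows essentially the same route as the paper's: append the sentinel $0$, observe that the terminal $0$ of each block makes distinct blocks of a column prefix-incomparable, and reduce the comparison to a block-aligned one via the order-faithful rank map. You are in fact slightly more thorough than the paper, which omits the case where one block sequence is a proper prefix of the other (e.g.\ $\prev(x[10..10])0$ versus $\prev(x[6..10])0$ in Fig.~\ref{figure:fig1}); your case $t=m$ covers it explicitly.
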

\begin{proof}
Notice that $0$ is the smallest symbol in the two strings,
so
\begin{eqnarray*}
  \prev(x[i_1..n]) \prec \prev(x[i_2..n]) &\Leftrightarrow&
\prev(x[i_1..n])0 \prec \prev(x[i_2..n])0\\
&\Leftrightarrow& B_{i_1,1}\cdots B_{i_1,z_{i_1}+1} \prec
B_{i_2,1}\cdots B_{i_2,z_{i_2}+1}.
\end{eqnarray*}
Also notice that since any block must end with a $0$,
if two blocks are not identical, it holds that one
cannot be a prefix of the other.
Thus, if
$B_{i_1,1}\cdots B_{i_1,z_{i_1}+1} \prec
B_{i_2,1}\cdots B_{i_2,z_{i_2}+1}$,
this implies that there is some block $k$ such that
$B_{i_1,j} = B_{i_2,j}$, for all $1 \leq j < k$,
and $B_{i_1,k} \prec B_{i_2,k}$, where
$B_{i_1,k}$ is not a prefix of $B_{i_2,k}$.
By definition,
$B_{i_1,k} \preceq B_{i_2,k} \Leftrightarrow C_{i_1,k} \leq C_{i_2,k}$.
Therefore, we have,
$B_{i_1,1}\cdots B_{i_1,z_{i_1}+1}\prec B_{i_2,1}\cdots B_{i_2,z_{i_2}+1} \Leftrightarrow C_{i_1} \prec C_{i_2}$.
\qed
\end{proof}

From Lemma~\ref{lemma:reduction}, the problem of lexicographically sorting
the set of strings $\{ \prev(x[1..n]), \ldots, \prev(x[n..n])\}$
reduces to the problem of lexicographically sorting
the set of strings $\{ C_1, \ldots, C_n\}$.
The latter can be done in $O(n\pi )$ time using radix sort,
since the strings are over the alphabet $\{1,\ldots,n\}$ and the total length of the
strings is at most $n\pi $.

What remains is to to compute $C_{i,j}$ for all $i,j$
in the same time bound. A problem is that the total length of all
$B_{i,j}$ is $\Theta(n^2)$, so we cannot afford to naively process
all of them.

Denote by $b_{i,j}$ and $e_{i,j}$ the beginning and end positions
of $B_{i,j}$ with respect to their (global) position in $x$.
Note that for any $1 \leq i \leq n$, we have $b_{i,1} = i$,
and $b_{i,j} = e_{i,j-1}+1$ for all $2\leq j\leq z_{i}+1$.
Our algorithm depends on the following simple yet crucial lemma.
\begin{lemma}\label{lemma:block_composition}
  For any $1< i \leq n$ and $1 \leq j \leq z_{i}+1$,
  we have that either
  \begin{enumerate}
  \item $b_{i,j} = e_{i-1,j}+1$, or,
  \item $b_{i,j} \geq b_{i-1,j}$, $e_{i,j} = e_{i-1,j}$, and $B_{i,j}$ is a suffix of $B_{i-1,j}$
 \end{enumerate}
  holds.
\end{lemma}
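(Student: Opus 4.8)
The plan is to track how the positions of the $0$'s in $\prev(x[i..n])$ --- equivalently, the block endpoints --- change when the starting index decreases from $i-1$ to $i$. Everything follows from two elementary facts, both immediate from the definition of the prev encoding. First, a global position $k$ with $i\le k\le n$ contributes a $0$ to $\prev(x[i..n])$ exactly when $x[k]\in\Pi$ and the previous occurrence of $x[k]$ in $x$ lies before position $i$ (or does not exist); hence, writing $Z_i$ for the set of global positions contributing a $0$ to $\prev(x[i..n])$, we have $Z_i=Z_{i-1}$ when $x[i-1]\in\Sigma$, whereas when $x[i-1]\in\Pi$ the set $Z_i$ is obtained from $Z_{i-1}$ by deleting $i-1$ (which is then the least element of $Z_{i-1}$) and inserting \emph{at most one} new element, namely the position $g$ of the first occurrence of the symbol $x[i-1]$ strictly to the right of $i-1$, if such a $g$ exists. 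Second, on the shared index range $\{i,\dots,n\}$ the two encodings $\prev(x[i..n])$ and $\prev(x[i-1..n])$ agree at every position, the only possible exception being $g$, where $\prev(x[i-1..n])$ carries the value $g-(i-1)>0$ while $\prev(x[i..n])$ carries $0$. Since the endpoints of the first $z_i$ blocks of $\prev(x[i..n])$ are precisely the elements of $Z_i$ in increasing order, $b_{i,1}=i$, and $b_{i,j}=e_{i,j-1}+1$ (as noted just before the statement), the block structure of suffix $i$ is entirely determined by $Z_i$ together with $i$.

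I would then carry out the case analysis dictated by the first fact. If $x[i-1]\in\Sigma$, then $e_{i,j}=e_{i-1,j}$ for all $j$; for $j\ge 2$ one also obtains $b_{i,j}=b_{i-1,j}$ and, by the second fact (no exceptional position arises), $B_{i,j}=B_{i-1,j}$, while for $j=1$ the index range of $B_{i,1}$ is a proper suffix of that of $B_{i-1,1}$ on which the two encodings agree, so $B_{i,1}$ is a proper suffix of $B_{i-1,1}$: alternative~2 holds for every $j$. If $x[i-1]\in\Pi$ and $g$ does not exist, then $Z_i=Z_{i-1}\setminus\{i-1\}$, so $e_{i,j}=e_{i-1,j+1}$, and a one-line computation gives $b_{i,j}=e_{i-1,j}+1$ for every $j$ (in particular for $j=1$, since $e_{i-1,1}=i-1$ when $x[i-1]\in\Pi$): alternative~1 holds throughout. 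Finally, if $x[i-1]\in\Pi$ and $g$ exists, let $m$ be the number of elements of $Z_i$ that do not exceed $g$, so that $e_{i,m}=g$, $e_{i,j}=e_{i-1,j+1}$ for $j<m$, and $e_{i,j}=e_{i-1,j}$ for $m<j\le z_i$ (here $z_i=z_{i-1}$). For $1\le j\le m$ the same kind of computation as before yields $b_{i,j}=e_{i-1,j}+1$, which is alternative~1. For $j>m$ we have $e_{i,j}=e_{i-1,j}$; moreover $b_{i,m+1}=g+1>e_{i-1,m}+1=b_{i-1,m+1}$ and $b_{i,j}=b_{i-1,j}$ for $j\ge m+2$, so $b_{i,j}\ge b_{i-1,j}$ in every such case; and since the index range $[b_{i,j},e_{i,j}]$ lies entirely to the right of $g$, the second fact shows the two encodings agree on it, whence $B_{i,j}$ is a suffix of $B_{i-1,j}$: alternative~2.

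The conceptual content is all in the two facts above; what remains is careful bookkeeping of the index shift caused by deleting $i-1$ from, and possibly inserting $g$ into, the sorted list of block endpoints of suffix $i-1$. The points needing a little care are the degenerate situations $m=z_{i-1}$ and $j=z_i+1$, where the convention $e_{i',z_{i'}+1}=n$ and the $0$ appended to the last block come into play; these cause no difficulty once one observes that this trailing $0$ belongs to the last block of \emph{every} suffix and therefore never breaks the suffix relation demanded by alternative~2. I expect this (routine) bookkeeping to be the main obstacle.
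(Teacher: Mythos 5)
Your proposal is correct and takes essentially the same approach as the paper: both analyze how $\prev(x[i-1..n])$ and $\prev(x[i..n])$ differ in at most one position (the first reoccurrence of $x[i-1]$, which turns into a $0$), and track how this deletes the first block and splits one block while leaving the rest intact. Your bookkeeping via the set $Z_i$ of zero positions is just a more explicit rendering of the paper's ``block $B_{i-1,j'}$ splits into two'' argument, and you handle the case where $x[i-1]$ does not reoccur directly rather than via the paper's limiting device.
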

\begin{proof}
  If $x[i-1] \in \Sigma$, then,
  $\prev(x[i..n])$ is a suffix of $\prev(x[i-1..n])$,
  i.e.,
  $\prev(x[i..n]) = \prev(x[i-1..n])[2..|n-i+2|]$
  and $\prev(x[i-1..n])[1] \neq 0$.
  Thus,
  $B_{i,1}$ is a suffix of $B_{i-1,1}$, and
  $B_{i,j} = B_{i-1,j}$ for all $2 \leq j \leq z_i$
  and the second case of the claim holds.

  If $x[i-1] \in \Pi$,
  the values in $\prev(x[i..n])$ are equivalent to the corresponding values
  of $\prev(x[i-1..n])[2..|n-i+2|]$, except possibly at some (global) position $k\geq i$
  when there is a second occurrence of the symbol $x[i-1]$ at $x[k]$ which becomes
  the first occurrence in $x[i..n]$. (In other words, the value corresponding to $x[k]$
  in $\prev(x[i-1..n])$ is $k-i+1$.)
  Since there is no previous occurrence of $x[i-1]$ in $x[i-1..n]$, $\prev(x[i-1..n])[1]=0$.
  The situation is depicted in Fig.~\ref{lemma:block_composition}.

  \begin{figure}[htbp]
  \begin{center}
    \includegraphics[width=\textwidth]{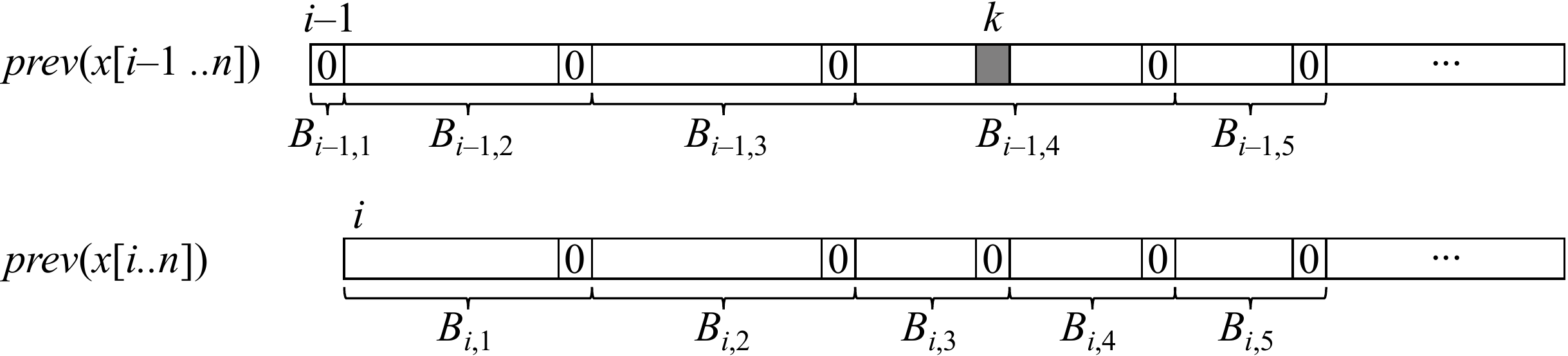}
  \end{center}
  \caption{A case in the proof of Lemma~\ref{lemma:block_composition},
  where $x[i]\in\Pi$, and $x[k]$ is the first occurrence of $x[i]$ in $x[i+1..n]$.
  The value corresponding to (global) position $k$ in $\prev(x[i..n])$, shown as a shaded box,
  is $k-i$, while it is $0$ in $\prev(x[i+1..n])$.
  All other values in $\prev(x[i..n])$ and $\prev(x[i+1..n])$
  at the same (global) position are equivalent. }
  \label{fig:block_composition}
  \end{figure}

  Let $B_{i-1,j'}$ be the block of $\prev(x[i-1..n])$ that contains (global) position $k$.
  Because, as mentioned previously,
  $\prev(x[i]..n)$ and $\prev(x[i-1..n])[2..|n-i+2|]$ are equivalent
  except for the value corresponding to (global) position $k$,
  the block structure of $\prev(x[i-1..n])$ is preserved in $\prev(x[i..n])$,
  except that (1) the first block $B_{i-1,1}$ disappears, and
  (2) the block $B_{i-1,j'}$ is split into two blocks, corresponding to $B_{i,j'-1}$ and $B_{i,j'}$.
  Therefore, the first case of the claim is satisfied for $1 \leq j \leq j'$,
  since $b_{i,j} = b_{i-1,j+1} = e_{i-1,j}+1$ for any $1 \leq j <  j'$.
  Also, we can see that the second case of the claim is satisfied
  for $j' \leq j \leq z_{i}$,
  since
  $B_{i,j'}$ is a suffix of $B_{i-1,j'}$,
  and $B_{i,j} = B_{i-1,j}$ for $j' < j \leq z_{i}$.

  Finally, the case when such $k$ does not exist can be considered to be included above
  by simply assuming we are looking at a prefix of a longer string and $k > |x|, j' > z_{i}$,
  since the prev encoding is preserved for prefixes, i.e.,
  the prev encoding of a prefix of any p-string $y$ is equivalent to the corresponding
  prefix of the prev encoding of $y$.
  Thus, the lemma holds.
  \qed
\end{proof}

Lemma~\ref{lemma:block_composition} implies that if we fix some $j$,
we can represent $B_{i,j}$ for all $i$, as suffixes (in the standard sense)
of strings of total length $O(n)$.

\begin{corollary}\label{corollary:blocksuperstring}
  For any $j$,
  there exists a set of strings $S_j$ with total length $n+1$
  over the alphabet $\Sigma\cup\{0,\ldots,n-1\}$
  such that $B_{i,j}$ is a suffix of some string in $S_j$ for all $i \in \{1, \ldots, n\}$.
\end{corollary}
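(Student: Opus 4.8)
The plan is to derive the corollary directly from Lemma~\ref{lemma:block_composition} by a grouping argument on the suffix index. Fix $j$ and restrict attention to the indices $i$ for which $B_{i,j}$ is defined, i.e.\ $z_i+1\ge j$. These form a (possibly empty) prefix $\{1,\dots,m_j\}$ of $\{1,\dots,n\}$: deleting the leading symbol of a suffix removes at most one $0$ from its prev encoding and creates at most one new one, so $z_i$ is non-increasing in $i$. For consecutive indices $i-1,i$ in this range, Lemma~\ref{lemma:block_composition} gives exactly one of two behaviours (they cannot both hold, since no block is empty): case~(1), where $B_{i,j}$ begins immediately after $B_{i-1,j}$ ends, $b_{i,j}=e_{i-1,j}+1$; or case~(2), where $B_{i,j}$ ends at the same position as $B_{i-1,j}$ and is an (ordinary) suffix of it. Partition $\{1,\dots,m_j\}$ into maximal runs on which case~(2) holds at every non-initial index, and call the first index $i_0$ of each run its \emph{leader}. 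By maximality, case~(1) holds at every leader except possibly $i_0=1$, and by transitivity of the suffix relation every $B_{i,j}$ is a suffix of the leader block of its run.

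I would then set $S_j=\{\,B_{i_0,j} : i_0\text{ is a leader}\,\}$; the suffix condition of the corollary holds by construction, and it remains to bound $\sum_{i_0}|B_{i_0,j}|$. The decisive observation is that the leader blocks occupy pairwise disjoint, consecutive intervals of positions of $x$: if $i_0<i_0'$ are consecutive leaders, then $i_0'-1$ is the last index of $i_0$'s run, so iterating case~(2) along that run gives $e_{i_0'-1,j}=e_{i_0,j}$, and case~(1) at $i_0'$ gives $b_{i_0',j}=e_{i_0'-1,j}+1=e_{i_0,j}+1$. Hence the leader blocks form a chain of pairwise disjoint, consecutive intervals contained in $\{1,\dots,n\}$, so their lengths sum to at most $n$; the only symbol not counted this way is the single $0$ appended to $\prev(x[i..n])$ when the relevant block is the last block $B_{i,z_i+1}$, which a short case check shows can only be the final leader block. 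This gives total length at most $n+1$. The alphabet requirement is immediate, since each $B_{i,j}$ is a substring of $\prev(x[i..n])0$, all of whose symbols lie in $\Sigma\cup\{0,\dots,n-1\}$.

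Because Lemma~\ref{lemma:block_composition} does the substantive work, there is no single hard step; the part needing the most care is the length accounting — verifying that the two cases of the lemma force the leader blocks to tile a contiguous, non-overlapping run of position intervals inside $\{1,\dots,n\}$, and correctly handling the trailing $0$ together with the boundary indices ($i=1$, and the point where $z_i+1$ first drops below $j$) so that nothing is double-counted.
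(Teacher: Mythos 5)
Your proof is correct and takes essentially the same route as the paper's: $S_j$ is exactly $B_{1,j}$ together with those $B_{i,j}$ for which case~(1) of Lemma~\ref{lemma:block_composition} applies (your ``leaders''), all other blocks being suffixes of these by transitivity of case~(2), and the $n+1$ bound comes from the leader blocks occupying disjoint consecutive position intervals. Your run/leader bookkeeping and the handling of the appended $0$ merely make explicit the one-line disjointness accounting in the paper's proof.
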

\begin{proof}
  We include $B_{i,j}$ in $S_j$, if $i = 1$, or,
  if $i > 1$ and $B_{i,j}$ satisfies the first case of Lemma~\ref{lemma:block_composition}.
  Since the first case implies that the (global) positions $[b_{i-1,j}..e_{i-1,j}]$ and $[b_{i,j}...e_{i,j}]$
  are disjoint, the total length of strings in $S_j$ is at most $n+1$ (including the $0$ appended to $B_{i,z_{i}+1}$).
  On the other hand, if $B_{i,j}$ satisfies the second case is,
  it is a suffix of an already included string.\qed
\end{proof}

Thus, computing $C_{i,j}$ for all $i$ can be done by
computing the generalized suffix array for the set $S_j$.
This can be done in $O(n)$ time given $S_j$~\cite{DBLP:journals/jda/KimSPP05,DBLP:journals/jda/KoA05,DBLP:journals/jacm/KarkkainenSB06} and 
thus, for all $j$, the total is $O(n\pi )$ time.
\begin{theorem}\label{theorem:computingPSA}
  The parameterized suffix array of a p-string of length $n$
  can be computed in $O(n\pi )$ time and $O(n)$ space.
\end{theorem}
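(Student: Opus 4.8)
The plan is to assemble the pieces already established in the excerpt into an end‑to‑end algorithm and account for its running time and space. By Lemma~\ref{lemma:reduction}, lexicographically sorting the suffixes $\prev(x[1..n]),\ldots,\prev(x[n..n])$ is equivalent to lexicographically sorting the strings $C_1,\ldots,C_n$ over the integer alphabet $\{1,\ldots,n\}$, and the output of the latter sort is exactly the array $\PSA$. Since each $C_i$ has length $z_i+1 \le \pi+1$ by the Observation, the total length of all the $C_i$ is $O(n\pi)$, so this final radix sort takes $O(n\pi)$ time and $O(n)$ space. Thus it remains only to produce, within $O(n\pi)$ time and $O(n)$ space, the numeric strings $C_i$ themselves, i.e.\ the ranks $C_{i,j}$ for all relevant $i$ and $j$.

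The key step is to compute the $C_{i,j}$ column by column, i.e.\ separately for each fixed $j \in \{1,\ldots,\pi+1\}$. First I would compute, for every $i$, the block boundaries $b_{i,j}, e_{i,j}$: using the constant‑time characterization of $\prev(x[i..n])[k]$ from the Preliminaries together with Lemma~\ref{lemma:block_composition}, these can be maintained incrementally as $i$ decreases (or increases) over all $i$ in total $O(n\pi)$ time, since each suffix has at most $\pi+1$ blocks. Next, for a fixed $j$, Corollary~\ref{corollary:blocksuperstring} gives a set $S_j$ of strings of total length at most $n+1$ over the alphabet $\Sigma\cup\{0,\ldots,n-1\}$ such that every $B_{i,j}$ is a suffix of some string in $S_j$; I would build $S_j$ explicitly (each of its strings is a contiguous range of positions of the global array $\prev(x)$, possibly with a $0$ appended), then compute the generalized suffix array of $S_j$ in $O(n)$ time using a linear‑time (integer‑alphabet) suffix array algorithm~\cite{DBLP:journals/jda/KimSPP05,DBLP:journals/jda/KoA05,DBLP:journals/jacm/KarkkainenSB06}. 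From this suffix array (and a co‑lexicographic/lexicographic rank table derived from it) one reads off the rank of each suffix, hence $C_{i,j}$ for every $i$; for the $B_{i,j}$ falling under case~2 of Lemma~\ref{lemma:block_composition} one simply looks up the rank of the appropriate suffix of the already‑included string. Summed over $j = 1,\ldots,\pi+1$, this is $O(n\pi)$ time and $O(n)$ space (reusing the same $O(n)$ workspace for each $j$).

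The main obstacle to state carefully is the bookkeeping that links suffixes of $S_j$ back to the correct $C_{i,j}$: one must verify that the ranks assigned within a single column $B_j$ are consistent with the definition $C_{i,j} = |\{B_{i',j} : B_{i',j} \prec B_{i,j}\}| + 1$, which needs the observation that every block ends in $0$ so no block is a proper prefix of another — precisely the fact exploited in the proof of Lemma~\ref{lemma:reduction} — so that the suffix‑array order on the $B_{i,j}$ coincides with the induced order on distinct blocks and ties (identical blocks) receive identical ranks. One also needs the trivial but necessary point that $\prev(x)$, and hence all the ranges defining $S_j$, can be computed in $O(n)$ time and space as noted in the Preliminaries. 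Once these consistency checks are in place, chaining the column‑wise rank computations with the final radix sort of Lemma~\ref{lemma:reduction} yields $\PSA$ in $O(n\pi)$ time and $O(n)$ space, proving the theorem.
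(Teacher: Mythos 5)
Your overall pipeline is the same as the paper's: reduce the comparison of $\prev(x[i..n])$ to the comparison of the rank strings $C_i$ via Lemma~\ref{lemma:reduction}, compute the ranks $C_{i,j}$ column by column from the sets $S_j$ of Corollary~\ref{corollary:blocksuperstring} using a linear-time generalized suffix array per column, and finish with a radix sort of the $C_i$ in $O(n\pi)$ time and $O(n)$ space by keeping only one column in memory at a time. The rank-consistency point you raise (no block is a proper prefix of another because every block ends in $0$) is exactly the paper's justification as well.

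The one place where your argument has a genuine hole is the claim that the block boundaries $b_{i,j},e_{i,j}$ ``can be maintained incrementally in total $O(n\pi)$ time'' using only the constant-time characterization of $\prev(x[i..n])[k]$ together with Lemma~\ref{lemma:block_composition}. When passing from suffix $i-1$ to suffix $i$ with $x[i-1]\in\Pi$, the lemma says that exactly one block $B_{i-1,j'}$ may split, namely the one containing the global position $k$ at which a new $0$ appears --- the first occurrence of $x[i-1]$ in $x[i..n]$. To decide, for a fixed $j$, whether case~1 or case~2 of the lemma applies, and hence whether a new string must be added to $S_j$, you must \emph{locate} $k$; the constant-time lookup of $\prev(x[i..n])$ at a given position only lets you \emph{test} a candidate position, not find $k$, and searching for it naively can cost $\Theta(n)$ per step. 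The paper closes this gap by precomputing a forward encoding $\forwd(x)$ (the distance from each parameterized position to the next occurrence of the same symbol) in $O(n)$ time, so that $k$ is available in $O(1)$; comparing $k$ against $[b_{i-1,j},e_{i-1,j}]$ then decides the case, and the residual scan for the next $0$ in case~1 is amortized $O(n)$ per column because the end positions $e_{i,j}$ only move forward. You need to add this next-occurrence table (or an equivalent device) for the $O(n\pi)$ bound on computing the $S_j$ to actually go through; the rest of your argument is sound.
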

\begin{proof}
  We compute a {\em forward encoding} of $x$, analogous to the prev encoding, defined as follows
  \[
    \forwd(x)[i] =
    \begin{cases}
      x[i]  & \text{if $x[i] \in \Sigma$,}\\
      \infty     & \text{if $x[i] \in \Pi$ and $x[i]\neq x[j]$ for any $i< j \leq n$,}\\
      j-i   & \text{if $x[i] \in \Pi$, $x[i]=x[j]$ and $x[i]\neq x[k]$ for any $i < k < j$.}
    \end{cases}
    \]
  This is done once, and can be computed in $O(n)$ time.
  Next, for any fixed $j$, we show how to compute the set $S_j$ in linear time.
  This is done by using $\forwd$ and Lemma~\ref{lemma:block_composition}.
  We can first scan $\prev(x)$ to obtain $B_{1,j}$.
  Suppose for some $i \geq 2$, we know the beginning and end positions $b_{i-1,j}$, $e_{i-1,j}$ of $B_{i-1,j}$.
  Notice that when $x[i]\in\Pi$, $k$ in the proof of Lemma~\ref{lemma:block_composition} 
  is $i+\forwd(x)[i-1]-2$. 
  Based on this value, we know that if $k < b_{i-1,j}$, then $B_{i,j} = B_{i-1,j}$ and if
  $b_{i-1,j} \leq k \leq e_{i-1,j}$ $B_{i,j}$ is a suffix of $B_{i-1,j}$, 
  which corresponds to the second case of Lemma~\ref{lemma:block_composition}.
  When $k > e_{i-1,j}$, this corresponds to the first case of Lemma~\ref{lemma:block_composition},
  so we scan $\prev(x[i..n])$ starting from position corresponding to the global position
  $b_{i,j} = e_{i-1,j}+1$ (i.e., $e_{i-1,j}-i$ in $\prev(x[i..n])$) until we find the first $0$,
  which gives us $B_{i,j}$ which we include in $S_j$.
  Since we only scan each position once, the total time for computing $S_j$ is $O(n)$.

  The time complexity follows from arguments for sorting $C_j$ based on radix sort.
  Since, for a single step of the radix sort,
  we only require the values $C_{i,j}$ for a fixed $j$ and
  all $1\leq i\leq n$ and from Corollary~\ref{corollary:blocksuperstring},
  the space complexity is $O(n)$. \qed
\end{proof}

\subsection{$\PLCP$ Construction}
Given $\PSA$, we can construct $\PLCP$ as follows in $O(n\pi )$ time and $O(n)$ space.
We recompute $S_j$ for $j = 1, \ldots, \pi $,
and each time process it for LCE queries, so that the longest common prefix between
$B_{i_1,j}$ and $B_{i_2,j}$ for some $1 \leq i_1,i_2 \leq n$ can be computed in constant time.
This can be done in time linear in the total length of $S_j$, so in $O(n\pi )$ total time for all $j$.
We compute the longest common prefix between each adjacent suffix in $\PSA$ block by block.
Since each block takes constant time, and
there are $O(\pi)$ blocks for each suffix,
the total is $O(n\pi )$ time for all entries of the $\PLCP$ array.
The space complexity is $O(n)$ since, as for the case of $\PSA$ construction,
we only process the $j$th block at each step.

\bibliographystyle{splncs04}
\bibliography{refs}

\end{document}